\theoremstyle{plain}
\newtheorem{theorem}{Theorem}
\newtheorem{cor}{Corollary}
\newtheorem*{invariant*}{Invariant}
\date{}
\title{A note on non-crossing path partitions in the  plane\thanks{Partially supported by grant PID2023-150725NB-I00, funded by MICIU/AEI/10.13039/5011 00011033, and by grant E41-23R, funded by Gobierno de Arag\'on.} }
\author{J. Tejel}
\affil{Departamento de M\'etodos Estad\'\i sticos and IUMA, \\ Universidad de Zaragoza, Spain. \\ {\tt jtejel@unizar.es}}
\begin{document}

\maketitle

\begin{abstract}
In the paper ``Lower bounds on the number of crossing-free subgraphs of $K_N$'' ({\it Computational Geometry} 16 (2000), 211-221),
it is shown that a double chain of $n$ points in the plane admits at least $\Omega(4.642126305^n)$ polygonizations, and it is claimed that it admits at most $O(5.61^n)$ polygonizations. In this note, we provide a proof of this last result. The proof is based on counting non-crossing path partitions for points in the plane in convex position, where a non-crossing path partition consists of a set of paths connecting the points such that no two edges cross and isolated points are allowed.

We prove that a set of $n$ points in the plane in convex position admits $\mathcal{O}^*(5.610718614^{n})$ non-crossing path partitions and a double chain of $n$ points in the plane admits at least $\Omega(7.164102920^n)$ non-crossing path partitions. If isolated points are not allowed, we also show that there are $\mathcal{O}^*(4.610718614^n)$ non-crossing path partitions for $n$ points in the plane in convex position and at least $\Omega(6.164492582^n)$ non-crossing path partitions in a double chain of $n$ points in the plane.
In addition, using a particular family of non-crossing path partitions for points in convex position, we provide an alternative proof for the result that a double chain of $n$ points admits at least $\Omega(4.642126305^n)$  polygonizations.
\end{abstract}

\section{Introduction}

Given a set $S$ of $n$ points in the plane, a {\em non-crossing graph} of size $n$ is a graph whose vertices are the points of $S$, and whose edges are straight line segments that do not cross. A classical problem in Combinatorics
is counting (or bounding) the number of non-crossing graphs of a certain type $A$ (such as triangulations,  polygonizations, perfect matchings, ...) that $S$ admits. It is well-known that if $\Phi_A (S)$ denotes the number of non-crossing graphs of type $A$ that $S$ admits, and $\Phi_A (n)$ denotes the maximum number of non-crossing graphs of type $A$ over all sets $S$ of $n$ points in the plane, then $\Phi_A(n) \le c^{n}$, where $c$ is a constant~\cite{ajtai82}.

\begin{figure}[ht!]
	\centering
	\subfloat[A ncp partition.]{
		\includegraphics[scale=0.4,page=18]{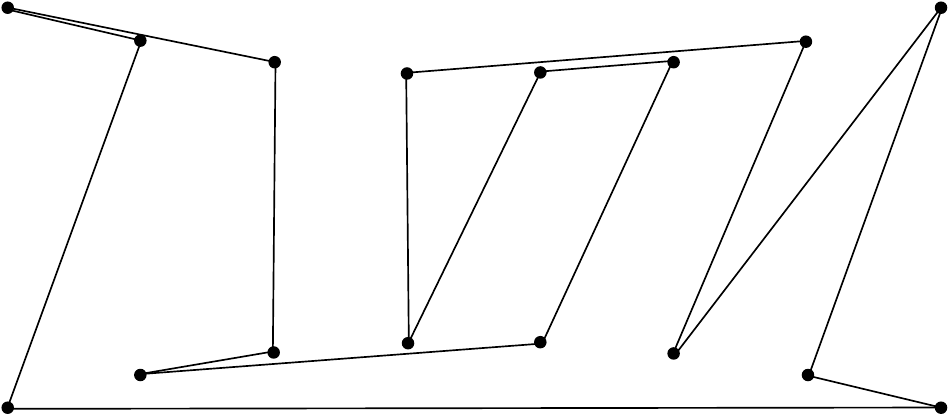}
		\label{fig:ncp}
	}~~~~~~~~~~~~~~~~
	\subfloat[A ncpws partition.]{
		\includegraphics[scale=0.4,page=19]{doublechain.pdf}
		\label{fig:ncpws}
   }
\caption{Examples of ncp and ncpws partitions.}
\label{fig:partitions}
\end{figure}

Quite a lot of research has been done to estimate $\Phi_A(n)$, for different types of non-crossing graphs~\cite{asi18,esteban22,huemer15,garcia00,huemer19,rw23,sharir11,sharir12,sharir13}. For instance, in the case of the family $T$ of triangulations (non-crossing graphs where all faces are triangles except maybe for the outer face), the best-known bounds for $ \Phi_T(n)$ are$$\Omega(9.08^n)\le \Phi_T(n)\le O(30^n)$$ For the family $P$ of polygonizations (non-crossing Hamiltonian cycles), the best-known bounds for $ \Phi_P(n)$ are  $$\Omega(4.64^n)\le \Phi_P(n)\le O(54.55^n)$$ We refer the reader to~\cite{sheffer} for a summary of the best-known bounds up-to date of $\Phi_A(n)$, for several types of non-crossing graphs.

In~\cite{garcia00}, it is shown that $\Omega(4.642126305^n)\le \Phi_P(n)$, by proving that a particular configuration of $n$ points, the so-called double chain, contains at least $\Omega(4.642126305^n)$ polygonizations. In the same paper, it is claimed without proof that a double chain of $n$ points contains at most $O(5.61^n)$ polygonizations.

One of the purposes of this note is to provide a proof of such an upper bound, since we have been asked several times for a proof of this result, and it has been demonstrated to be useful for proving other results (for example, see~\cite{cruces25}). To this end, we will need to count the number of non-crossing path partitions for points in convex position. Thus, another purpose of this note is to give some insights into the number of non-crossing path partitions for some particular configurations of points in the plane.

Given a set $S$ of points in the plane, a {\em non-crossing path partition} of $S$ (ncp partition for short) is a non-crossing graph where each component is a path and where isolated vertices (singletons) are allowed. A singleton will also be considered as a path (of length 0). See Figure~\ref{fig:ncp} for an example. If singletons are not allowed, that is, if every point of $S$ must belong to a path of length at least 1, we will speak of {\em non-crossing path partitions without singletons} (ncpws partitions for short). See Figure~\ref{fig:ncpws}. We will refer to $NCP$ and $NCPWS$ as the classes of ncp partitions and ncpws partitions, respectively.

In Section~\ref{sec:convexo}, we provide exact and asymptotic formulas to compute the number of ncp and ncpws partitions for sets of points in convex position. In particular, we prove that for a set $S$ of $n$ points in the plane in convex position, $\Phi_{NCP} (S)$ is $\mathcal{O}^*(5.610718614^{n})$\footnote{In the $\mathcal{O}^*$ notation we neglect polynomial factors and give only the dominating exponential term.}, and  $\Phi_{NCPWS} (S)$ is $\mathcal{O}^*(4.610718614^n)$.
Section~\ref{sec:doublechain} is devoted to studying the double chain. We show that a double chain of $n$ points in the plane admits at most $O(5.610718614^{n})$ polygonizations. We also show that a double chain of $n$ points admits at least $\Omega(7.164102920^n)$ ncp partitions and at least $\Omega(6.164492582^n)$ ncpws partitions. In addition, using the ideas given in Section~\ref{sec:convexo}, we provide an alternative proof for the result that a double chain of $n$ points admits at least $\Omega(4.642126305^n)$ polygonizations. Finally, in Section~\ref{sec:conclusion}, we draw some conclusions.

\section{Ncp and ncpws partitions for points in convex position}\label{sec:convexo}

One of the most studied configurations of points in the plane is that of points in convex position, because of its
properties. In particular, when counting non-crossing graphs, the exact position of the points does not matter. For several results concerning counting non-crossing graphs for points in convex position, the reader is referred to~\cite{esteban22,flajolet99,noy98} and the references therein. For instance, the number of triangulations on $n+2$ points in convex position is the Catalan number $C_n=\frac{1}{n+1}\binom{2n}{n}$, the number of non-crossing perfect matchings on $2n$ points in convex position is again the Catalan number $C_n$, and the number of non-crossing spanning trees on $n+1$ points in convex position is the generalized Catalan number $\frac{1}{2n+1}\binom{3n}{n}$.

Using recurrence formulas, in this section, we compute the number of ncp and ncpws partitions that a set of $n$ points in convex position admits. We also give the asymptotic behaviour of these numbers.

\begin{theorem}\label{thm:ncp partitions}
Let $S$ be a set of $n$ points in the plane in convex position. Then, the number of ncp partitions that $S$ admits is $\mathcal{O}^*(5.610718614^n)$.
\end{theorem}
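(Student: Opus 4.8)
The plan is to count ncp partitions on $n$ points in convex position through a recurrence obtained by a boundary decomposition, and then to read off the exponential growth rate from the associated characteristic polynomial. First I would reduce the geometry to a purely combinatorial model. Label the points $1,2,\dots,n$ in their convex (cyclic) order; since all points lie on the hull, two chords $(i,j)$ and $(k,l)$ (with $i<j$, $k<l$) cross if and only if they interleave, i.e.\ $i<k<j<l$ or $k<i<l<j$. Thus an ncp partition is exactly a set of chords that is pairwise non-interleaving, has every vertex of degree at most $2$, and contains no cycle (these last two conditions together say precisely that every component is a path, singletons included). Writing $a_n$ for the number of such configurations, the goal becomes $a_n=\mathcal{O}^{*}(5.610718614^{\,n})$.

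Next I would set up a recurrence by sweeping over the points from left to right, decomposing a configuration according to the chord(s) incident to an extreme point (say point $1$, or equivalently the newly added rightmost point). The natural control parameter is the set of \emph{exposed endpoints}: vertices of current degree $0$ or $1$ that are still visible from the right and hence available to receive a future edge. Adding a new point either leaves it as a fresh exposed endpoint, or joins it by one or two chords to exposed endpoints; each new chord \emph{encloses} all exposed endpoints strictly between its ends, removing them from further consideration, while a chord landing on a degree-$1$ endpoint closes that endpoint. Encoding this bookkeeping produces a finite system of linear recurrences relating $a_n$ to a few auxiliary sequences that track how many exposed endpoints of each type are present, and these can be eliminated to give a single constant-coefficient linear recurrence for $a_n$. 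As a structural check I would use the identity
\[
a_n=\sum_{k=0}^{n}\binom{n}{k}\,b_k,
\]
where $b_k$ counts the ncpws partitions on $k$ convex points: deleting the singletons of an ncp partition leaves an ncpws partition on the remaining (still convex-position) vertices, and conversely. Since $\sum_k\binom{n}{k}\beta^{k}=(1+\beta)^{n}$, this explains why the base for $a_n$ should be exactly one more than the base for $b_n$, matching $5.610718614=1+4.610718614$.

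Finally I would solve the recurrence asymptotically. The dominant behaviour of a constant-coefficient linear recurrence is $\mathcal{O}^{*}(\rho^{n})$, where $\rho$ is the root of largest modulus of its characteristic polynomial; I would verify that this polynomial has a simple real root $\rho=5.610718614\ldots$ dominating all the others, whence $a_n=\mathcal{O}^{*}(5.610718614^{\,n})$, as claimed. The value of $\rho$ can be cross-checked against the first exact values $a_0,a_1,a_2,a_3,a_4=1,1,2,7,29$ produced by the recurrence.

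I expect the main obstacle to be the correct derivation of the recurrence rather than its solution. The delicate point is that a chord does \emph{not} make the region it encloses independent of the rest: the two endpoints of a chord may still receive further edges, and an interior vertex can carry a free degree slot that is shared between the enclosed side and the outer side. Getting the enclosure and closure rules exactly right, and choosing auxiliary sequences rich enough that the system closes but no richer, is where the real work lies; once the recurrence is established, locating its dominant root and reading off the $\mathcal{O}^{*}$ bound is routine.
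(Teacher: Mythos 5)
Your reduction to non-interleaving chords with maximum degree~$2$ and no cycles is fine, and your initial values $1,1,2,7,29$ agree with the paper's $g(0),\dots,g(4)$. But the core of your plan --- eliminating the ``exposed endpoint'' bookkeeping down to a \emph{single constant-coefficient linear recurrence} for $a_n$ and reading the growth rate off the characteristic polynomial --- cannot work. The number of exposed endpoints in a left-to-right sweep is unbounded (it can be linear in $n$), and when a new chord encloses a block of them, that block becomes an independent convex subproblem of arbitrary size. This nesting makes the structure context-free rather than regular: the generating function $G(z)=\sum g(n)z^n$ satisfies a cubic algebraic equation, namely $(3z^3-4z^2)G^3+(z^2+4z)G^2-(3z+1)G+1=0$, and is not rational. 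Consequently $g(n)$ satisfies no constant-coefficient linear recurrence (its asymptotics carry a subexponential factor of the form $n^{-3/2}$, which is impossible for a rational generating function), and there is no characteristic polynomial whose dominant root is $5.610718614$. The constant arises instead as $1/r$, where $(r,s)$ solves the system $F(z,w)=F_w(z,w)=0$ for the cubic $F$ above --- i.e.\ it comes from singularity analysis (Bender's theorem), not from a largest eigenvalue.

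The decomposition you need is a convolution, not a transfer matrix: the paper conditions on whether point~$1$ is a singleton, a path endpoint, or a path interior point, and uses the fact that a chord $\{1,i\}$ splits $\{2,\dots,i-1\}$ from $\{i+1,\dots,n\}$ into independent subproblems, yielding recurrences such as $f(n)=\sum_{i=2}^{n}\bigl(g(i-2)g(n-i)+2f(i-1)g(n-i)\bigr)$ together with analogous equations for $g$ and $h$. Your ``structural check'' $a_n=\sum_k\binom{n}{k}b_{n-k}$ is correct and does transfer an $\mathcal{O}^*(\beta^n)$ bound for ncpws partitions to an $\mathcal{O}^*((1+\beta)^n)$ bound for ncp partitions (the paper makes the same observation after Theorem~2), but it only relocates the problem: bounding the ncpws count requires the same algebraic-generating-function machinery, so it does not salvage the linear-recurrence route.
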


\begin{proof}
To count the number of ncp partitions, we will distinguish whether a point is a singleton, is an endpoint of a path, or is in the middle of a path. We assume that the points are clockwise numbered from 1 to $n$. See Figure~\ref{fig:g}.

Let $g(n)$ be the number of ncp partitions for $n$ points in convex position. By inspection, it is not difficult to check that $g(1)=1$, $g(2)=2$ or $g(3)=7$. Given a point $i$, let $f(n)$ be the number of ncp partitions such that $i$ is an endpoint of a path. In the same way, let $h(n)$ be the number of ncp partitions such that $i$ is in the middle of a path. Note that $f(n)$ and $h(n)$ do not depend on the point $i$ that is chosen.  Clearly, $f(1)=0$, $f(2)=1$, $h(1)=0$, $h(2)=0$ and $h(3)=1$. By convenience, we define $g(0)=1$, $f(0)=0$ and $h(0)=0$.

In the following, we give recurrence formulas for $g(n), f(n)$, and $h(n)$. To compute $g(n)$, observe that point 1 can be a singleton, an endpoint of a path, or a point in the middle of a path. Thus, we have
\begin{equation}\label{ecu:uno}
  g(n) = g(n-1)+f(n)+h(n)
\end{equation}
for $n\ge 1$.

\begin{figure}[ht!]
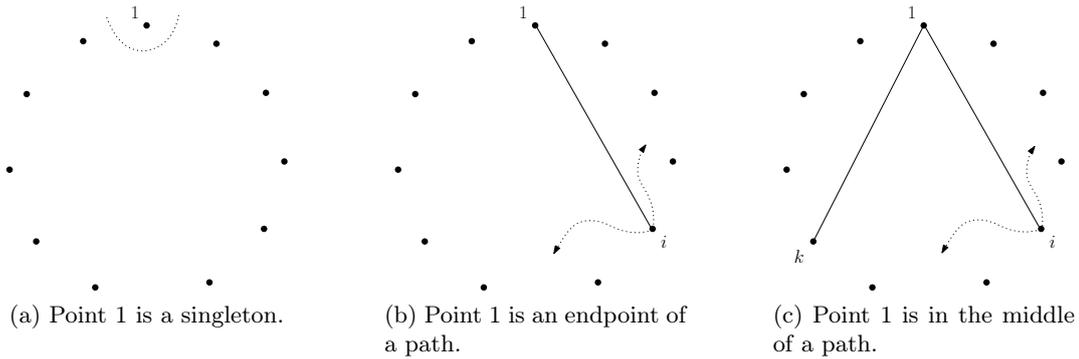

	\centering
	\subfloat[Point 1 is a singleton.]{
		\includegraphics[scale=0.4,page=5]{doublechain.pdf}
		\label{fig:singleton}
	}~~~~~~~~
	\subfloat[Point 1 is an endpoint of a path.]{
		\includegraphics[scale=0.4,page=6]{doublechain.pdf}
		\label{fig:endpoint}
	}~~~~~~~~
	\subfloat[Point 1 is in the middle of a path.]{
		\includegraphics[scale=0.4,page=7]{doublechain.pdf}
		\label{fig:middle}
   }
\caption{The three possibilities for a point in a ncp partition.}
\label{fig:g}
\end{figure}

To compute $f(n)$ (when 1 is an endpoint of a path), point 1 can be connected to a point $i$, for $i=2, \ldots , n$. After connecting 1 to $i$, we have three cases, depending on whether $i$ is the other endpoint of the path starting at 1, is connected to a point $j<i$, or is connected to a point $j>i$. See Figure~\ref{fig:endpoint}. In the first case, we have a ncp partition for the points from $2$ to $i-1$ (and we have $g(i-2)$ possible ncp partitions), and another ncp partition for the points from $i+1$ to $n$ (and we have $g(n-i)$ possible ncp partitions). In the second case, we have a ncp partition for the points from 2 to $i$, where $i$ is an endpoint of a path (and we have $f(i-1)$ possible ncp partitions), and a ncp partition for the points from $i+1$ to $n$ (and we have $g(n-i)$ possible ncp partitions). In the third case, we have a ncp partition for the points from $2$ to $i-1$, and a ncp partition for the points from $i$ to $n$, where $i$ is an endpoint of a path. Therefore, we have the following equality
\begin{equation}\label{ecu:dos}
  f(n) = \sum_{i=2}^n \big( g(i-2)g(n-i) + f(i-1)g(n-i)+g(i-2)f(n-i+1)\big)
\end{equation}
for $n\ge 2$. Equivalently,
\begin{equation}\label{ecu:tres}
  f(n) = \sum_{i=2}^n \big( g(i-2)g(n-i) + 2 f(i-1)g(n-i)\big)
\end{equation}
for $n\ge 2$.

To compute $h(n)$, we argue similarly. See Figure~\ref{fig:middle}. Point 1 is connected to a  point $i$, for $i=2, \ldots , n-1$, and to a point $k>i$. If $i$ is an endpoint of the path containing 1, then we have a ncp partition for the points from $2$ to $i-1$, and a ncp partition for the points from $i+1$ to $1$, where 1 is an endpoint of a path. If $i$ is connected to a point $j<i$, then we have a ncp partition for the points from $2$ to $i$, where $i$ is an endpoint of a path, and a ncp partition for the points from $i+1$ to $1$, where $1$ is an endpoint of a path. Last, if $i$ is connected to a point $j>i$, then, identifying points 1 and $i$ as a new point $1'$,  we have a ncp partition for the points from $2$ to $i-1$, and a ncp partition for the points from $i+1$ to $1'$, where $1'$ is in the middle of a path. Thus, the following equality holds
\begin{equation}\label{ecu:cuatro}
  h(n) = \sum_{i=2}^{n-1} \big( g(i-2)f(n-i+1) + f(i-1)f(n-i+1) + g(i-2) h(n-i+1)\big)
\end{equation}
for $n\ge 3$.

Let $G(z)=\sum_{n=0}^\infty g(n)z^n$, $F(z)=\sum_{n=0}^\infty f(n)z^n$ and $H(z)=\sum_{n=0}^\infty h(n)z^n$ be the generating functions of $g(n)$, $f(n)$ and $h(n)$, respectively. Using Equations~\ref{ecu:uno},~\ref{ecu:tres} and~\ref{ecu:cuatro}, after some calculations we obtain
\begin{eqnarray}
  G(z)-1 &=& zG(z)+F(z)+H(z) \\
  F(z) &=& z^2G(z)^2+2zF(z)G(z) \\
  H(z) &=& zG(z)F(z) + F(z)^2 + zG(z)H(z)
\end{eqnarray}

From these equations, we get that $G(z)$ must satisfy the following equation
\begin{equation}\label{ecu:cinco}
  (3z^3-4z^2)G(z)^3+(z^2+4z)G(z)^2+(-3z-1)G(z)+1=0
\end{equation}

A classical result from Bender~\cite{bender74} in analytic combinatorics establishes that if a generating function $G(z)$ satisfies $F(z,G(z))\equiv 0$ under some suitable conditions, then $g(n)=  \mathcal{O}^*((\frac{1}{r})^n)$, where $r=z$ and $s=w$ are the solutions of $F(z,w)=F_w(z,w)=0$, and $F_w$ is a partial derivative. Applying this result to Equation~\ref{ecu:cinco}, we obtain $r=0.178230289$ and $s=1.593329627$ when solving $F(z,w)=F_w(z,w)=0$, so $g(n) = \mathcal{O}^*((1/r)^n)=\mathcal{O}^*(5.610718614^n)$.
\end{proof}

Using the same approach, if singletons are not allowed, we can also count ncpws partitions for $n$ points in convex position.

\begin{theorem}\label{thm:ncpws partitions}
Let $S$ be a set of $n$ points in the plane in convex position. Then, the number of ncpws partitions that $S$ admits is $\mathcal{O}^*(4.610718614^n)$.
\end{theorem}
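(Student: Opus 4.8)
The plan is to mirror the proof of Theorem~\ref{thm:ncp partitions}, replacing the counting functions $g,f,h$ by their no-singleton analogues $\tilde g,\tilde f,\tilde h$, where $\tilde g(n)$ counts ncpws partitions of $n$ points in convex position, and $\tilde f(n)$ (resp.\ $\tilde h(n)$) counts those in which a fixed point is an endpoint (resp.\ in the middle) of a path. The only genuinely new input is the boundary data: I would keep $\tilde g(0)=1$ (the empty partition) but set $\tilde g(1)=0$, since a single point cannot belong to a path of length at least $1$; likewise $\tilde f(1)=0$, $\tilde f(2)=1$, $\tilde h(1)=\tilde h(2)=0$ and $\tilde h(3)=1$.

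The key observation is that the geometric case analysis behind Equations~\ref{ecu:tres} and~\ref{ecu:cuatro} is unchanged: when point $1$ is joined to point $i$, the chord $1i$ isolates the points $\{2,\dots,i-1\}$ from $\{i+1,\dots,n\}$, so each block can only be completed among its own points. Hence the convolution recurrences for $\tilde f$ and $\tilde h$ are formally identical to~\ref{ecu:tres} and~\ref{ecu:cuatro} with every occurrence of $g,f,h$ decorated by a tilde; the no-singleton constraint enters only through the value $\tilde g(1)=0$, which automatically kills any term that would leave a lone isolated point inside an isolated block, and through the master equation, where the ``singleton'' option for point $1$ disappears, giving $\tilde g(n)=\tilde f(n)+\tilde h(n)$ for $n\ge 1$ in place of~\ref{ecu:uno}.

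Passing to the generating functions $\tilde G,\tilde F,\tilde H$ then yields the system $\tilde G-1=\tilde F+\tilde H$, $\tilde F=z^2\tilde G^2+2z\tilde F\tilde G$ and $\tilde H=z\tilde G\tilde F+\tilde F^2+z\tilde G\tilde H$, differing from the system in Theorem~\ref{thm:ncp partitions} only by the missing $z\tilde G$ term in the first line. Eliminating $\tilde F$ and $\tilde H$ I expect a cubic $\tilde P(z,\tilde G)=0$; a short computation gives $(z^3+4z^2)\tilde G^3-(5z^2+4z)\tilde G^2+(4z+1)\tilde G-1=0$, whose low-order coefficients reproduce $\tilde g(1)=0$, $\tilde g(2)=1$, $\tilde g(3)=3$ as a sanity check. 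Applying Bender's theorem~\cite{bender74} exactly as in~\ref{ecu:cinco}, I would solve $\tilde P(z,w)=\tilde P_w(z,w)=0$ to locate the dominant singularity $\tilde r$ and conclude $\tilde g(n)=\mathcal O^*((1/\tilde r)^n)=\mathcal O^*(4.610718614^n)$.

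The main obstacle is the same as in Theorem~\ref{thm:ncp partitions}: verifying the analytic hypotheses of Bender's theorem (selecting the correct algebraic branch through $\tilde G(0)=1$, checking that the relevant singularity is a simple square-root branch point and that the sequence is aperiodic) and carrying out the elimination cleanly. A conceptual shortcut, useful both as a check and as an independent proof, comes from inserting singletons: every ncp partition splits uniquely into a choice of which points are non-singletons, completed by an ncpws partition on them, so $g(n)=\sum_{k=0}^n\binom nk\tilde g(k)$, equivalently $G(z)=\tfrac{1}{1-z}\,\tilde G\!\big(\tfrac{z}{1-z}\big)$. This binomial transform sends the singularity $\tilde r$ of $\tilde G$ to $r=\tilde r/(1+\tilde r)$ for $G$, so $1/\tilde r=1/r-1$; with $1/r=5.610718614$ from Theorem~\ref{thm:ncp partitions} this gives exactly $1/\tilde r=4.610718614$ and explains the otherwise curious ``$-1$'' relating the two growth rates.
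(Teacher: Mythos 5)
Your proposal is correct and follows essentially the same route as the paper's proof: the same no-singleton analogues of $g,f,h$ with the same boundary values, the identical convolution recurrences and generating-function system, the same cubic (your $\tilde P$ matches Equation~\ref{ecu:quince} exactly), and the same appeal to Bender's theorem. Your closing observation that $G(z)=\frac{1}{1-z}\,\tilde G\bigl(\frac{z}{1-z}\bigr)$ maps the singularities via $1/\tilde r=1/r-1$ is a nice independent confirmation of the ``$-1$'' between the two growth rates; the paper records the binomial relation $g(n)=\sum_{i}\binom{n}{i}gs(n-i)$ but uses it only to locate the optimal singleton fraction, not as an alternative derivation of the bound.
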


\begin{proof}
The proof follows the lines in the proof of Theorem~\ref{thm:ncp partitions}, so we omit some details.
Given $n$ points in convex position, let $gs(n)$, $fs(n)$ and $hs(n)$ be the number of ncpws partitions, the number of ncpws partitions when a point $i$ is an endpoint of a path, and the number of ncpws partitions when a point $i$ is in the middle of a path, respectively. The same reasoning used to obtain recurrence formulas for $g(n), f(n)$ and $h(n)$ yields the following recurrence formulas for $gs(n), fs(n)$ and $hs(n)$
\begin{equation}\label{ecu:nueve}
  gs(n) = fs(n)+hs(n)
\end{equation}
for $n\ge 1$ and $gs(0)=1$,
\begin{equation}\label{ecu:diez}
  fs(n) = \sum_{i=2}^n \big( gs(i-2)gs(n-i) + 2 fs(i-1)gs(n-i)\big)
\end{equation}
for $n\ge 2$ and $fs(0)=fs(1)=0$ and
\begin{equation}\label{ecu:once}
  hs(n) = \sum_{i=2}^{n-1} \big( gs(i-2)fs(n-i+1) + fs(i-1)fs(n-i+1) + gs(i-2) hs(n-i+1)\big)
\end{equation}
for $n\ge 3$ and $hs(0)=hs(1)=hs(2)=0$.

Note that the only difference of these equations concerning Equations~\ref{ecu:uno},~\ref{ecu:tres} and~\ref{ecu:cuatro} is that now $g(1)=0$ and the term $gs(n-1)$ disappears in Equation~\ref{ecu:nueve} because singletons are not allowed.

If $GS(z), FS(z)$ and $HS(z)$ are the generating functions of $gs(n), fs(n)$ and $hs(n)$, respectively, from Equations~\ref{ecu:nueve},~\ref{ecu:diez} and~\ref{ecu:once}, we obtain
\begin{eqnarray}
  GS(z)-1 &=& FS(z)+HS(z) \\
  FS(z) &=& z^2GS(z)^2+2zFS(z)GS(z) \\
  HS(z) &=& zGS(z)FS(z) + FS(z)^2 + zGS(z)HS(z)
\end{eqnarray}

From these equations, $GS(z)$ must satisfy the equation
\begin{equation}\label{ecu:quince}
  (z^3+4z^2)GS(z)^3+(-5z^2-4z)GS(z)^2+(4z+1)GS(z)-1=0
\end{equation}

Applying again Bender's result to Equation~\ref{ecu:quince}, we obtain $r=0.2168859312$ and $s=1.309350027$ when solving $F(z,w)=F_w(z,w)=0$. Therefore, $gs(n) = \mathcal{O}^*((1/r)^n)=\mathcal{O}^*(4.610718614^n)$.
\end{proof}

\begin{table}[!tb]
\centering
\begin{tabular}{|c|c|c|c|c|c|c|c|c|c|c|c|}
  \hline
  $n$ & 1 &  2 & 3 & 4 & 5 & 6 & 7 & 8 & 9 & 10 & 11 \\
  \hline
  $g(n)$ & 1 & 2 & 7 & 29 & 126 & 564 & 2591 & 12171 & 58237 & 282918 &  1391820  \\
  \hline
  $gs(n)$ & 0 & 1 & 3 & 10 & 35 & 128 & 483 & 1866 & 7344 & 29342 & 118701  \\ \hline
\end{tabular}
	\caption{$g(n)$ and $gs(n)$ for small values of $n$.}\label{tab:g}
\end{table}

Table~\ref{tab:g} summarizes $g(n)$ and $gs(n)$ for small values of $n$. We point out that $g(n)$ can be easily obtained from $gs(n)$, using the following reasoning. A ncp partition will contain $i$ singletons for some $i=0, \ldots , n$, and the remaining points will belong to a ncpws partition. Since the $i$ singletons can be chosen in $\binom{n}{i}$ ways, then $$g(n)=\sum_{i=0}^n \binom{n}{i} gs(n-i)$$

As $g(n)$ is the addition of $n+1$ terms, this allows us to compute the value $0 \le \alpha\le 1$ for which, when choosing $\alpha n$ singletons, the number of ncp partitions is $\mathcal{O}^*(5.610718614^n)$. To find the value of $\alpha$ that maximizes $\binom{n}{\alpha n}\cdot gs\left((1-\alpha)n\right)$, we use the fact that $gs\left((1-\alpha)n\right) = \mathcal{O}^*\left(4.610718614^{(1-\alpha)n}\right)$ and the property that $\binom{n}{\alpha n} \thickapprox 2^{H(\alpha)n}$, where $H(\alpha)$ is the binary entropy function defined as $H(\alpha)=-\alpha \log_2(\alpha)-(1-\alpha) \log_2(1-\alpha)$. The value of $\alpha$ that maximizes $2^{H(\alpha)n}\cdot 4.610718614^{(1-\alpha) n}$ is $\alpha=0.1782302$. Hence, when choosing approximately 18 percent of the points as singletons, the number of ncp partitions is approximately $\mathcal{O}^*(5.610718614^n)$.

We also point out that the sequence $0,$ $1,$ $3,$ $10,$ $35,$ $128,$ $483,$ $1866,$ $7344,$ $29342,$ $118701, \ldots $ of integers was already known. It is the sequence A303730 in the On-line Encyclopedia of Integer Sequences (OEIS)~\cite{A303730}. This sequence is described as ``the number of non-crossing path sets on $n$ nodes with each path having at least two nodes'', and was obtained by analyzing a special family of colored Motzkin paths of order 2~\cite{dejager21}. Thus, although a way of computing $gs(n)$
was already known, we provide here a different approach to computing it.

\section{Some results for the double chain}\label{sec:doublechain}

In this section, we revisit the problem of counting the number of polygonizations that a double chain $D$ consisting of $N$ points admits and prove that $$\Omega(4.642126305^N)\le \Phi_P(D)\le  O(5.610718614^N)$$ The lower bound was already proved in~\cite{garcia00}, but we give here an alternative proof. We also show lower bounds on the number of ncp and ncpws partitions that $D$ admits.

\begin{figure}[th!]
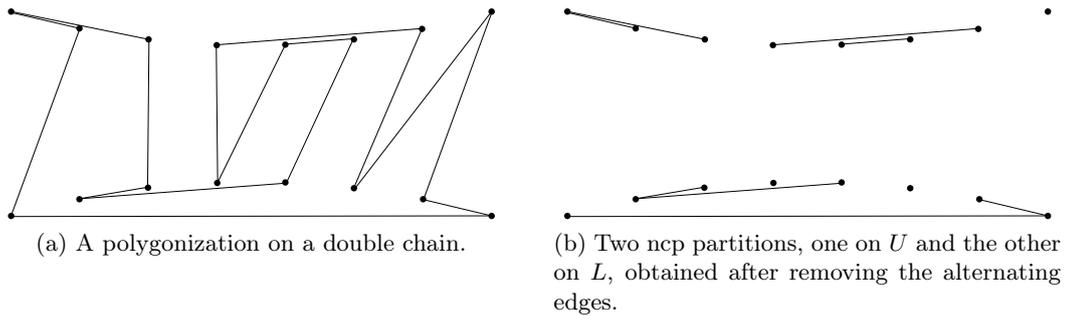

	\centering
	\subfloat[A polygonization on a double chain.]{
		\includegraphics[scale=0.4,page=1]{doublechain.pdf}
		\label{fig:poligonization}
	}~~~~
	\subfloat[Two ncp partitions, one on $U$ and the other on $L$, obtained after removing the alternating edges.]{
		\includegraphics[scale=0.4,page=2]{doublechain.pdf}
		\label{fig:forest}
   }
\caption{Removing the alternating edges from a polygonization in a double chain, to obtain two ncp partitions.}
\label{fig:doublechain}
\end{figure}

We start by recalling what a double chain is. A {\em double chain} is a configuration of $N=2n$ points in the plane satisfying the following properties:
\begin{itemize}
\itemsep 0pt
  \item $n$ of the points are on a upper convex chain $U$, the other $n$ points are on a lower convex chain $L$, and both chains have opposed concavity.
  \item For every pair of points in $U$, the line passing through these points leaves all the points in $L$ below.
  \item For every pair of points in $L$, the line passing through these points leaves all the points in $U$ above.
\end{itemize}
The segments (edges) connecting points in opposite chains will be called {\em alternating} segments (edges). Figure~\ref{fig:poligonization} shows a double chain consisting of 16 points and a polygonization on it.

The double chain and its generalizations are a fundamental tool for obtaining lower bounds for $\Phi_A (n)$ and several types $A$ of non-crossing graphs~\cite{asi18, dumitrescu13, garcia00, huemer19, huemer15, rw23}. In the following theorem, we prove the bound claimed in~\cite{garcia00} that a double chain of $N$ points contains at most $O(5.610718614^N)$ polygonizations.

\begin{theorem}\label{thm:upperbound}
The number of polygonizations that a double chain $D$ of $N=2n$ points admits is at most $O(5.610718614^N)$.
\end{theorem}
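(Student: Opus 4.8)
The plan is to exhibit an injection from the set of polygonizations of $D$ into the set of pairs $(\pi_U,\pi_L)$, where $\pi_U$ is a ncp partition of the upper chain $U$ and $\pi_L$ is a ncp partition of the lower chain $L$. Starting from a polygonization, I would delete its alternating edges, as in Figure~\ref{fig:forest}, and keep only the edges internal to each chain. Since every vertex has degree $2$ in the Hamiltonian cycle, a vertex of $U$ has internal degree $0$, $1$ or $2$; because $U$ is in convex position, the internal edges of $U$ form a non-crossing graph of maximum degree $2$, which (having no room for a cycle, as $U$ is connected to $L$ in the polygonization) is a disjoint union of paths, i.e.\ a ncp partition $\pi_U$, and likewise $\pi_L$. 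A vertex of internal degree $0$ is precisely a vertex both of whose incident edges are alternating; such a vertex becomes a singleton, which is exactly why singletons must be allowed in the class counted in Theorem~\ref{thm:ncp partitions}.

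Next I would record how the alternating edges attach to each chain. A vertex in the middle of a path of $\pi_U$ needs no alternating edge, an endpoint of a path needs one, and a singleton needs two; hence each component of $\pi_U$ contributes exactly two \emph{free ends} that must receive an alternating edge. Thus if $\pi_U$ has $c_U$ components there are $2c_U$ free ends on $U$, and similarly $2c_L$ on $L$. As each alternating edge joins one free end of $U$ to one free end of $L$, counting endpoints gives $2c_U=2c_L$, so the alternating edges realize a perfect matching between the $2c$ free ends of $U$ and the $2c$ free ends of $L$.

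The crucial step is to show that, for given $\pi_U$ and $\pi_L$, this matching is forced. In the clockwise order around the convex hull of $D$ all free ends of $U$ precede all free ends of $L$, so the alternating edges are chords of a convex polygon joining two disjoint arcs of equal size; since the polygonization is planar these chords are pairwise non-crossing, and a non-crossing perfect matching between two disjoint arcs of equal size is unique (the ``rainbow'' matching, obtained by repeatedly joining the outermost remaining free end of $U$ to the outermost remaining free end of $L$). Consequently the alternating edges are completely determined by $(\pi_U,\pi_L)$, the deletion map is injective, and $\Phi_P(D)\le g(n)^2$. By Theorem~\ref{thm:ncp partitions}, $g(n)=\mathcal{O}^*(5.610718614^{n})$, so $\Phi_P(D)=\mathcal{O}^*(5.610718614^{2n})=\mathcal{O}^*(5.610718614^{N})$.

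The main obstacle I anticipate is the uniqueness of the rainbow matching, and in particular handling singletons correctly: a singleton provides two free ends at the same point, so I would first prove uniqueness for pairwise distinct free ends and then recover the singleton case by a perturbation argument that splits each singleton into two nearby endpoints joined by a short edge. Secondary points to verify are that a non-crossing maximum-degree-$2$ graph on points in convex position is indeed a ncp partition (no internal cycle can occur), and that the free-end counts on the two chains always coincide, which I have reduced to the identity $2c_U=2c_L$ above.
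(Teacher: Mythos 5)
Your proposal is correct and follows essentially the same route as the paper: delete the alternating edges to obtain a pair of ncp partitions on $U$ and $L$, observe that the non-crossing alternating edges reattaching the free ends are forced (connected in order, left to right), conclude injectivity into pairs of ncp partitions, and invoke Theorem~\ref{thm:ncp partitions} to get the bound $g(n)^2=\mathcal{O}^*(5.610718614^{N})$. You merely spell out a few details the paper asserts without proof (uniqueness of the forced matching, the free-end count $2c_U=2c_L$, and the treatment of singletons), which is fine.
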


\begin{proof}
Let $P$ be a polygonization of $D$. By removing the alternating edges of $P$, we obtain two ncp partitions, one on $U$ and the other on $L$. Since $P$ is a polygonization, the number of alternating edges is necessarily an even number. Thus, if $P$ contains $2k$ alternating edges, then each ncp partition consists of $k$ paths (recall that a singleton is a path of length 0). See Figure~\ref{fig:doublechain} for an illustration.

Suppose now that we are given two ncp partitions, one on $U$ and the other on $L$, each consisting of $k$ paths. To obtain a polygonization from the two ncp partitions, we must add alternating edges connecting the endpoints of the paths in $L$ to the endpoints of the paths in $U$. To avoid crossings between the alternating edges, the only option is that the endpoints on $U$ and the endpoints on $L$ are connected in order, from left to right. However, depending on the two ncp partitions, when adding the non-crossing alternating edges connecting the endpoints, we can obtain a polygonization or a set of non-crossing cycles. Figure~\ref{fig:doublechain_2} shows two ncp partitions and the two non-crossing cycles obtained after adding the non-crossing alternating edges. Note that in this example, one of the cycles consists of a duplicated edge.

\begin{figure}[th!]
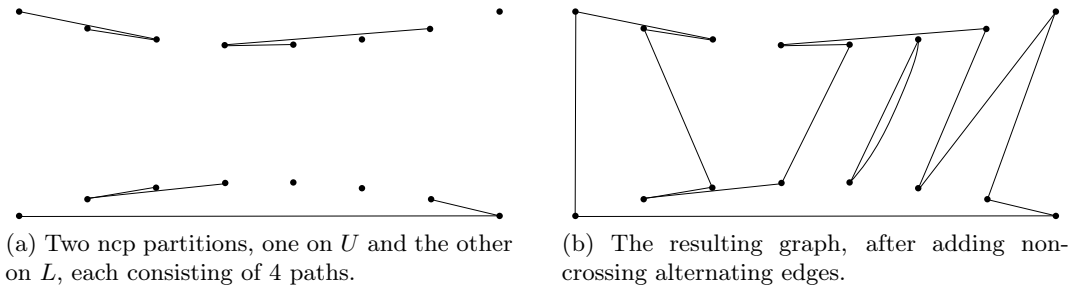

	\centering
	\subfloat[Two ncp partitions, one on $U$ and the other on $L$, each consisting of 4 paths.]{
		\includegraphics[scale=0.4,page=3]{doublechain.pdf}
		\label{fig:twoforests}
	}~~~~
	\subfloat[The resulting graph, after adding non-crossing alternating edges.]{
		\includegraphics[scale=0.4,page=4]{doublechain.pdf}
		\label{fig:twocycles}
   }
\caption{Adding non-crossing alternating edges to two ncp partitions.}
\label{fig:doublechain_2}
\end{figure}

As a consequence, if a polygonization $P_1$ defines the pair $(C_1, C'_1)$ of ncp partitions, $C_1$ on $U$ and $C'_1$ on $L$, and a polygonization $P_2$ defines the pair $(C_2, C'_2)$ of ncp partitions, $C_2$ on $U$ and $C'_2$ on $L$, then $(C_1, C'_1)$ and $(C_2, C'_2)$ must be different. Since the number of ncp partitions for $n$ points in convex position is $\mathcal{O}^*(5.610718614^n)$, then the number of polygonizations of $D$ is at most $O(5.610718614^N)$.
\end{proof}

We note that the previous theorem also holds, although the number of points on $U$ differs from the number of points on $L$. Thus, we have the following corollary.

\begin{cor}\label{cor:chain}
The number of polygonizations that a double chain $D$ of $n$ points on $U$ and $m$ points on $L$ admits is at most $O(5.610718614^{n+m})$.
\end{cor}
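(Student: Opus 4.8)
The plan is to observe that Corollary~\ref{cor:chain} is a direct generalization of Theorem~\ref{thm:upperbound}, and that the proof of the theorem already contains essentially everything needed; the only task is to verify that nowhere in that argument did we use the equality of the number of points on $U$ and $L$. First I would revisit the injective map constructed in the proof of Theorem~\ref{thm:upperbound}: to each polygonization $P$ of $D$ we associate the pair $(C, C')$ of ncp partitions obtained by deleting the alternating edges, where $C$ lies on $U$ and $C'$ lies on $L$. The key structural fact is that this map is injective, because the alternating edges, when they exist, are forced to reconnect the path-endpoints of $C$ and $C'$ in the unique left-to-right non-crossing order; hence the pair $(C, C')$ determines $P$.

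Next I would note that this injectivity argument is purely combinatorial and makes no reference to $|U| = |L|$. In the unequal case, a polygonization still visits all $n+m$ points, and removing the alternating edges still yields a path partition on each chain; the number of alternating edges must still be even (so that the endpoints on the two chains can be matched in order to close up into a single cycle), and this parity constraint does not require $n = m$. I would simply remark that the removal-and-reconstruction correspondence goes through verbatim for a double chain with $n$ points on $U$ and $m$ points on $L$.

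Having established injectivity, I would bound the number of polygonizations by the number of admissible pairs $(C, C')$. By Theorem~\ref{thm:ncp partitions}, the number of ncp partitions on the $n$ points of $U$ (which are in convex position) is $\mathcal{O}^*(5.610718614^{n})$, and likewise the number on the $m$ points of $L$ is $\mathcal{O}^*(5.610718614^{m})$. Therefore the number of pairs is at most the product $\mathcal{O}^*(5.610718614^{n}) \cdot \mathcal{O}^*(5.610718614^{m}) = \mathcal{O}^*(5.610718614^{n+m})$, and the injectivity yields the same upper bound on $\Phi_P(D)$, which is $O(5.610718614^{n+m})$ once the polynomial factors are absorbed.

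I do not expect a genuine obstacle here, since the corollary is a mild generalization rather than a new result; the entire content is the remark that the proof of Theorem~\ref{thm:upperbound} used only that each chain is in convex position and that the alternating edges reconnect endpoints in a forced order. The one point meriting a sentence of care is the parity of the number of alternating edges in the unequal case: one should confirm that a single Hamiltonian cycle alternating between the two chains still crosses from $U$ to $L$ an even number of times, which holds for any closed curve separating into the two chains regardless of how many points each chain carries. Beyond that, the argument is a one-line appeal to the already-proved theorem together with the product of the two convex-position counts.
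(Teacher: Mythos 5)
Your proposal is correct and matches the paper's treatment: the paper gives no separate proof for this corollary, merely remarking that the argument of Theorem~\ref{thm:upperbound} (deleting alternating edges to get an injectively determined pair of ncp partitions, one per chain) never uses $|U|=|L|$, which is exactly the observation you make and then combine with the product bound $\mathcal{O}^*(5.610718614^{n})\cdot\mathcal{O}^*(5.610718614^{m})$. Your extra care about the parity of the alternating edges is a reasonable sanity check but adds nothing beyond what the theorem's proof already establishes.
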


As explained in the proof of Theorem~\ref{thm:upperbound}, there is no guarantee that a polygonization is obtained by adding non-crossing alternating edges to two arbitrary ncp partitions. However, there are some cases where we can guarantee that.

Given a set $S$ of $n$ points in convex position numbered clockwise from 1 to $n$, an {\em ordered ncp partition} is a ncp partition on $S$ such that if $i$ and $j$ are the endpoints of a path, with $i<j$, then there are no other endpoints of paths between $i$ and $j$.

Assume now that from left to right, the points on $U$ ($L$) are numbered from 1 to $n$. Given two ordered ncp partitions, one on $L$ and the other on $U$, each consisting of $k$ paths (see Figure~\ref{fig:order_1}), if we skip the first endpoint on $U$ and the last endpoint on $L$, we always obtain a non-crossing Hamiltonian path when adding non-crossing alternating edges connecting the endpoints on $U$ to the endpoints on $L$ in order, from left to right. See Figure~\ref{fig:order_2} for an illustration of this construction. In addition, by adding a constant number of points (four in Figure~\ref{fig:order_3}), any non-crossing Hamiltonian path obtained as described previously can be transformed into a polygonization on a double chain. Thus, counting non-crossing Hamiltonian paths and counting polygonizations are asymptotically equivalent problems. As a consequence, if we can count the number of ordered ncp partitions consisting of $k$ paths for $n$ points in convex position, we can count a lot of non-crossing Hamiltonian paths (cycles) on a double chain.

\begin{figure}[ht!]
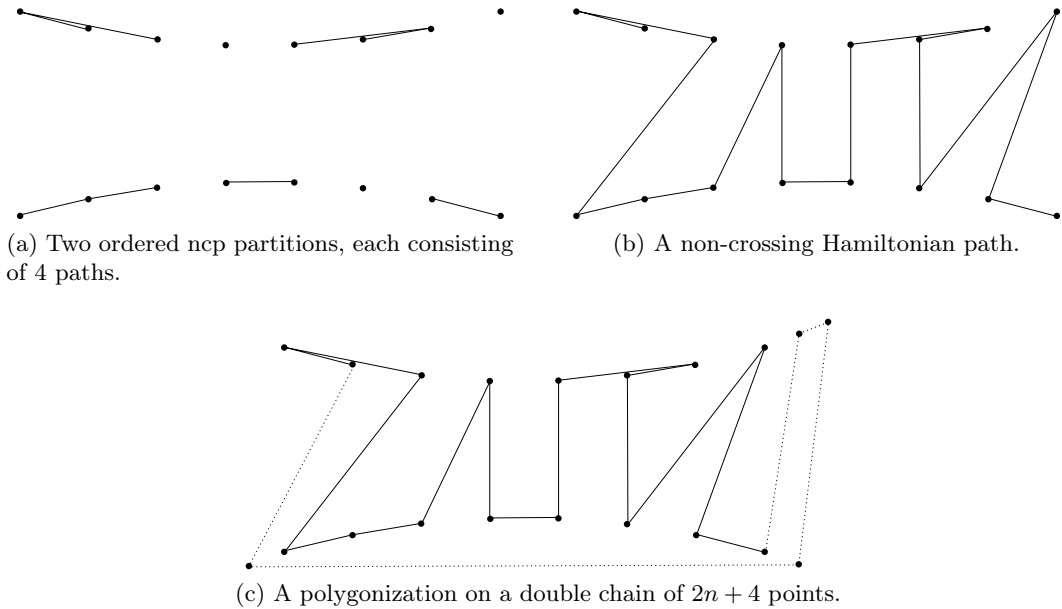

	\centering
	\subfloat[Two ordered ncp partitions, each consisting of 4 paths.]{
		\includegraphics[scale=0.4,page=8]{doublechain.pdf}
		\label{fig:order_1}
	}~~~~
	\subfloat[A non-crossing Hamiltonian path.]{
		\includegraphics[scale=0.4,page=9]{doublechain.pdf}
		\label{fig:order_2}
	}\\
	\subfloat[A polygonization on a double chain of $2n+4$ points.]{
		\includegraphics[scale=0.4,page=10]{doublechain.pdf}
		\label{fig:order_3}
   }
\caption{Obtaining a non-crossing Hamiltonian path (cycle), by adding non-crossing alternating edges to two ordered ncp partitions.}
\label{fig:order}
\end{figure}

The same happens in the proof given in~\cite{garcia00} to prove that a double chain of $n$ points contains at least $\Omega(4.642126305^n)$ polygonizations. That proof is based on counting curves of jump 1 visiting the points of a convex chain. These curves have the property that when the jumps are removed, the resulting graph is an ordered ncp partition for points in convex position. Thus, as a result of this observation, counting such curves is equivalent to counting a subset of ordered ncp partitions for points in convex position.

The following theorem counts the number of ordered ncp partitions for $n$ points in convex position.

\begin{theorem}\label{thm:ordered}
Let $S$ be a set of $n$ points in the plane in convex position. Then, the number of ordered ncp partitions that $S$ admits is $\mathcal{O}^*(4.642126305^n)$.
\end{theorem}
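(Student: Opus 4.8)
The plan is to reuse the generating-function machinery of Theorems~\ref{thm:ncp partitions} and~\ref{thm:ncpws partitions}, adapting the recurrences to encode the ordering constraint. First I would record a structural description of an ordered ncp partition: ordering the non-singleton paths by their left endpoints and writing $a_m<b_m$ for the endpoints of the $m$-th path, the defining condition is equivalent to the endpoints being pairwise non-interleaved, that is, $a_1<b_1<a_2<b_2<\cdots<a_k<b_k$. Equivalently, strictly between the two endpoints of any single path there are no endpoints of other paths, so by the non-crossing property every point lying strictly between $a_m$ and $b_m$ is either a singleton or a middle vertex of that same path. Note, however, that a path may still send middle vertices outside its own endpoint interval $[a_m,b_m]$ (for instance the path $1$-$3$-$2$ on three points), and this is the feature that makes the bookkeeping delicate.

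As in Theorem~\ref{thm:ncp partitions}, I would introduce $g_o(n)$, $f_o(n)$ and $h_o(n)$ for the number of ordered ncp partitions on $n$ points in convex position, respectively with no constraint, with the extreme point $1$ forced to be an endpoint, and with $1$ forced to be a middle vertex, together with $g_o(0)=1$. The master decomposition $g_o(n)=g_o(n-1)+f_o(n)+h_o(n)$ (point $1$ a singleton, an endpoint, or a middle vertex) carries over unchanged, since deleting the leftmost singleton preserves the ordering. The decisive difference from Equations~\ref{ecu:tres} and~\ref{ecu:cuatro} is that whenever the recursion ``encloses'' a block of points strictly between the two endpoints of a path, the ordering constraint forbids any endpoint there, so that block must consist entirely of singletons and contributes a factor $1$ in place of a factor $g_o(\cdot)$. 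Carrying this through the case analysis on the first edge $1$-$i$ (and on the behaviour at $i$) yields a system of recurrences, which I would translate into polynomial relations among the generating functions $G_o(z)$, $F_o(z)$, $H_o(z)$, eliminate $F_o$ and $H_o$, and obtain a single algebraic equation $\mathcal F(z,G_o(z))\equiv 0$.

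Finally I would invoke Bender's result~\cite{bender74} exactly as in the previous proofs, solving $\mathcal F(z,w)=\mathcal F_w(z,w)=0$ to locate the dominant singularity $r$ and concluding $g_o(n)=\mathcal O^*((1/r)^n)=\mathcal O^*(4.642126305^n)$; that this constant coincides with the lower-bound constant of~\cite{garcia00} is a reassuring consistency check, in line with the remark that these partitions are exactly what the ``curves of jump~$1$'' encode. The step I expect to be the main obstacle is precisely the faithful translation of the global ordering constraint into local recurrences: unlike in Theorem~\ref{thm:ncp partitions}, when a path reaches out to a vertex $i$ and returns with right endpoint $b_1<i$, the region between $b_1$ and $i$ may host entire ordered sub-configurations, so a naive port of Equation~\ref{ecu:tres} fails to close. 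I would handle this either by enriching the family of auxiliary functions so as to track the position of the returning endpoint, or by an alternative ``irreducible block'' decomposition $G_o(z)=1/(1-I(z))$, where $I(z)$ generates the ordered partitions on $\{1,\dots,t\}$ having no proper prefix $\{1,\dots,s\}$ closed under edges; in either route I would validate the recurrence against the small values $g_o(1),\dots,g_o(4)=1,2,7,28$ before extracting the asymptotics.
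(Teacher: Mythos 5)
Your overall architecture is exactly the paper's: the three auxiliary sequences $g_o$, $f_o$, $h_o$, the case analysis on the first edge incident to point $1$, the translation into an algebraic system for the generating functions, and the appeal to Bender's theorem. The gap lies in your translation of the ordering constraint. You assert that a point lying strictly between the two endpoints $a_m<b_m$ of a path may be ``either a singleton or a middle vertex of that same path''. But in the paper a singleton is explicitly a path (of length $0$), hence it is itself an endpoint of a path, and the defining condition ``no other endpoints of paths between $i$ and $j$'' therefore also forbids singletons in the open interval $(a_m,b_m)$. Your reading counts a strictly larger class: already for $n=3$, the configuration consisting of the edge $\{1,3\}$ together with the singleton $2$ is ordered for you but not for the paper, and your proposed check values $g_o(3)=7$ and $g_o(4)=28$ disagree with the correct values $6$ and $21$ of Table~\ref{tab:g464}. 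Your planned validation against small cases would therefore not catch the error; it would confirm the wrong sequence. The correct non-interleaving characterization must include singletons as degenerate endpoint pairs with $a_m=b_m$, so that the condition reads $a_1\le b_1<a_2\le b_2<\cdots<a_k\le b_k$.

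This is not cosmetic: it changes the recurrences and the constant. Under the correct definition, whenever the decomposition encloses the block $\{2,\dots,i-1\}$ strictly between the two endpoints of point $1$'s path (your cases where the far endpoint is $i$ itself or lies beyond $i$), that block must be \emph{empty}, not merely all singletons; this forces $i=2$ and produces the isolated terms $go(n-2)$ and $fo(n-1)$ in Equation~\ref{ecu:17}, with only the case where the path turns back into $\{2,\dots,i-1\}$ retaining a full convolution $\sum_i fo(i-1)go(n-i)$. (In particular, no enrichment of the auxiliary functions or ``irreducible block'' decomposition is needed: the case you worry about, where the path reaches out to $i$ and returns, is absorbed verbatim into $fo(i-1)$, exactly as in Equation~\ref{ecu:tres}.) With your ``all singletons, factor $1$'' rule these terms become convolutions with the constant sequence, the algebraic equation for $G_o$ differs from Equation~\ref{ecu:22}, and the resulting class --- which amounts to an ordered ncpws partition on an arbitrary $(n-i)$-subset plus $i$ freely placed singletons --- carries an extra entropy factor $\binom{n}{\alpha n}\approx 2^{H(\alpha)n}$ that pushes the growth constant strictly above $4.642126305$. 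The bound you would extract is therefore weaker than the one claimed, and the proof as outlined does not establish the theorem.
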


\begin{proof}
Again, the proof follows the lines in the proof of Theorem~\ref{thm:ncp partitions}, so we omit most of the details. Let $go(n)$, $fo(n)$, and $ho(n)$ be the number of ordered ncp partitions, the number of ordered ncp partitions when a point $i$ is an endpoint of a path, and the number of ordered ncp partitions when a point $i$ is in the middle of a path, respectively. The recurrence formulas for $go(n), fo(n)$ and $ho(n)$ are the following

\begin{equation}\label{ecu:16}
  go(n) = go(n-1)+fo(n)+ho(n)
\end{equation}
for $n\ge 1$ and $go(0)=1$,
\begin{equation}\label{ecu:17}
fo(n) = go(n-2)+fo(n-1) + \sum_{i=2}^n fo(i-1)go(n-i)
\end{equation}
for $n\ge 2$ and $fo(0)=fo(1)=0$, and
\begin{equation}\label{ecu:18}
ho(n) = \sum_{i=2}^{n-1} \big( go(i-2) fo(n-i+1) + fo(i-1)fo(n-i+1) + go(i-2) ho(n-i+1)\big)
\end{equation}
for $n\ge 3$ and $ho(0)=ho(1)=ho(2)=0$.

The first values of $go(n)$ are shown in Table~\ref{tab:g464}. Observe that in relation to Equations~\ref{ecu:uno},~\ref{ecu:tres} and~\ref{ecu:cuatro}, the main difference is in Equation~\ref{ecu:17}. Now, since the ncp partition is ordered, when 1 and $j$ are the endpoints of a path $p$, with $j\ge i$, then a path between 1 and $i$ cannot exist by definition. This implies that $i$ can only be point 2, and the terms $go(n-2)$ and $fo(n-1)$ appear in Equation~\ref{ecu:17}, depending on whether $i$ is the other endpoint of $p$ or not.

\begin{table}[!tb]
\centering
\begin{tabular}{|c|c|c|c|c|c|c|c|c|c|c|c|}
  \hline
  $n$ & 1 &  2 & 3 & 4 & 5 & 6 & 7 & 8 & 9 & 10 & 11 \\
  \hline
  $go(n)$ & 1 & 2 & 6 & 21 & 77 & 289 & 1107 & 4322 & 17162 & 69137 &  281917  \\
  \hline
\end{tabular}
	\caption{$go(n)$ for small values of $n$.}\label{tab:g464}
\end{table}

If $GO(z), FO(z)$ and $HO(z)$ are the generating functions of $go(n), fo(n)$ and $ho(n)$, respectively, from Equations~\ref{ecu:16},~\ref{ecu:17} and~\ref{ecu:18} we obtain

\begin{eqnarray}
  GO(z)-1 &=& zGO(z)+FO(z)+HO(z) \\
  FO(z) &=& zFO(z)GO(z)+z^2GO(z)+zFO(z) \\
  HO(z) &=& zFO(z)GO(z)+FO(z)^2+zGO(z)HO(z)
\end{eqnarray}
so $GO(z)$ must satisfy the equation
\begin{equation}\label{ecu:22}
(z^3-z^2)GO(z)^3+(2z^3-3z^2+2z)GO(z)^2+(z-1)GO(z)+z^2-2z+1
\end{equation}

Applying Bender's result to Equation~\ref{ecu:22}, we obtain $r=0.2154185247$ and $s=1.875110104$.
Therefore, $go(n) = \mathcal{O}^*((1/r)^n)=\mathcal{O}^*(4.642126305^n)$.
\end{proof}

We point out that ordered ncp partitions can be classified according to whether they consist of $1, 2, 3, \ldots, n$ paths.  Thus, there is an integer $k$ (which usually depends on $n$) such that the number of ordered ncp partitions consisting of $k$ paths, for $n$ points in convex position, is necessarily $\mathcal{O}^*(4.642126305^n)$. Therefore, a double chain of $N$ points admits at least $\mathcal{O}^*(4.642126305^N)$ polygonizations. In this way, we are providing an alternative proof of this result that was proved in~\cite{garcia00} using curves of jump 1.

To conclude this section, we show lower bounds for the number of ncp and ncpws partitions that a double chain admits.

\begin{theorem}\label{thm:doublechain}
Let $D$ be a double chain of $N=2n$ points. Then, $D$ admits at least $\Omega (7.164102920^N)$ ncp partitions and at least $\Omega(6.164492582^N)$ ncpws partitions.
\end{theorem}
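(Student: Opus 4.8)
The plan is to prove both lower bounds by exhibiting, for each class, an explicitly countable family of partitions on $D$ and then extracting its exponential growth with the same generating-function machinery used in Theorems~\ref{thm:ncp partitions}--\ref{thm:ordered}. The starting point is structural: in a double chain every edge of $U$ lies above every point of $L$ and every edge of $L$ lies below every point of $U$, so a $U$-edge and an $L$-edge never cross. Consequently, any ncp partition $C_U$ on the $n$ points of $U$ and any ncp partition $C_L$ on the $n$ points of $L$ can be superimposed to give a valid ncp partition of $D$, and the only partitions of $D$ not arising this way are those using at least one alternating edge. This already yields the crude bound $\Phi_{NCP}(D)\ge g(n)^2=\mathcal{O}^*(5.610718614^{2n})$; since $5.610718614<7.164102920$, the alternating edges must be exploited to reach the claimed base, so the heart of the argument is to count how many non-crossing alternating edges can be added while keeping every component a path.

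To organize this I would refine the quantities $g,f,h$ (and $gs,fs,hs$) so that they also record the number of \emph{path-endpoints} available on a chain, since both the number and the left-to-right positions of these endpoints control how many alternating edges can be attached. Two alternating edges $u_a\ell_b$ and $u_c\ell_d$ cross precisely when they are not monotone, so a non-crossing set of alternating edges joins endpoints of $U$ to endpoints of $L$ in left-to-right order; this ordered, convolution-type compatibility is exactly the structure that the recurrences of Section~\ref{sec:convexo} already encode. I would therefore set up recurrences $G_D,F_D,H_D$ for $D$ that mimic Equations~\ref{ecu:uno}--\ref{ecu:cuatro}, but which, when a boundary point is processed, offer the additional option of sending an alternating edge to a compatible endpoint on the opposite chain. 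Passing to generating functions turns each recurrence into an algebraic equation relating $G_D,F_D,H_D$ to $G,F,H$.

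From that algebraic system I would eliminate variables to obtain a single polynomial equation $\Psi(z,G_D(z))=0$ and then apply Bender's result~\cite{bender74}, exactly as in the previous proofs: solving $\Psi=\Psi_w=0$ locates the dominant singularity $r$, and $\Phi_{NCP}(D)=\mathcal{O}^*((1/r)^{2n})$ gives the base $7.164102920$; the ncpws case is identical with $gs,fs,hs$ in place of $g,f,h$ and yields $6.164492582$. As a sanity check, the singleton identity $g(n)=\sum_i\binom{n}{i}gs(n-i)$ forces the ncp and ncpws bases to differ by roughly one (here $7.164102920\approx 6.164492582+1$), exactly as in the convex case where $5.610718614=4.610718614+1$; the small residual discrepancy simply reflects that the two families are counted by separate constructions rather than one being derived from the other.

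The main obstacle is the alternating-edge bookkeeping. Superimposing $C_U$, $C_L$ and a monotone set of alternating edges can create a vertex of degree three or close a cycle (as Figure~\ref{fig:doublechain_2} illustrates for the analogous polygonization construction), so the recurrences must be restricted to additions that keep the contracted path/endpoint graph a disjoint union of paths, and this must be done without either double-counting a partition or discarding so many that the growth rate falls below the target. Making the decomposition injective while keeping the family rich enough to realize $7.164102920$ (rather than a strictly smaller base) is the delicate step; by contrast, verifying the analyticity and aperiodicity hypotheses needed to invoke Bender's theorem on the resulting systems is routine, just as in Theorems~\ref{thm:ncp partitions}--\ref{thm:ordered}.
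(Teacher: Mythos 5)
Your proposal correctly identifies the key structural fact (edges within $U$ and within $L$ never cross edges of the other chain, so superimposing two convex-chain partitions only gives $5.61^{2n}$) and correctly diagnoses that alternating edges must carry the extra growth. But what you have written is a plan, not a proof: the entire content of the theorem lies in the step you defer. You propose to build a full recurrence/generating-function system for $D$ that records available path-endpoints and their left-to-right positions, eliminate to a polynomial $\Psi(z,G_D)=0$, and read off the constants from Bender's theorem --- and then you acknowledge that making this bookkeeping injective, cycle-free, and degree-bounded ``is the delicate step'' without carrying it out. No recurrence is written down, no polynomial is derived, and the constants $7.164102920$ and $6.164492582$ are asserted rather than computed. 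There is also a conceptual mismatch: these constants are not the dominant singularity of any natural system counting \emph{all} ncp partitions of $D$; the theorem is only a lower bound, and the constants are artifacts of a specific subfamily, so your Bender-based route has no reason to land on exactly these numbers.

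The paper's proof avoids the global bookkeeping entirely by an explicit, easily counted construction. It first builds a ``ladder'' family of ncpws partitions using only alternating edges on points $1,\dots,i$ of $U$ and $n+1,\dots,n+i$ of $L$: classifying the partial partitions by whether the current rung $\{i,\,n+i\}$ has both or only one of its ends free gives the linear recurrences $a_i=a_{i-1}+b_{i-1}$, $b_i=2a_{i-1}+b_{i-1}$, hence growth $\left(1+\sqrt{2}\right)^{n}$ for $n$ rungs. It then chooses $\alpha n$ points on each chain (in $\binom{n}{\alpha n}^2$ ways) to participate in the ladder and independently places a ncpws (resp.\ ncp) partition on the remaining $(1-\alpha)n$ points of each chain using Theorems~\ref{thm:ncp partitions} and~\ref{thm:ncpws partitions}; optimizing $2^{2H(\alpha)n}\,c^{2(1-\alpha)n}\left(1+\sqrt{2}\right)^{\alpha n}$ over $\alpha$ yields exactly the stated bases. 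To repair your argument you would need to supply a concrete injective family of this kind; as it stands, the proposal has a genuine gap where the construction should be.
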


\begin{proof}
We first show that $D$ admits at least $\Omega (6.164492582^N)$ ncpws partitions. To this end, we first count some ncpws partitions using only alternating edges. We assume that from left to right, the points of $U$ are numbered from 1 to $n$ and the points of $L$ from $n+1$ to $2n$.

We iteratively build some ncpws partitions using alternating edges as follows. In a generic step $i$, we build ncpws partitions using points from 1 to $i$ and points from $n+1$ to $n+i$ such that $i$ is connected to $n+i$ and both are the endpoints of the path connecting them, or $i$ is connected to $n+i$, but only one of them is an endpoint of a path. Let $A_i$ be the set of these ncpws partitions such that $i$ is connected to $n+i$ and both are the endpoints of the path connecting them, and let $B_i$ be the set of these ncpws partitions such that $i$ is connected to $n+i$, but only one of them is an endpoint of a path. In the first step, we consider only the ncpws partition that consists of the edge connecting $1$ and $n+1$. Hence, $A_1$ contains one ncpws partition and $B_1$ is empty.

\begin{figure}[ht!]
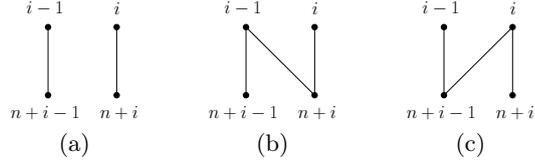

	\centering
	\subfloat[]{
		\includegraphics[scale=0.4,page=11]{doublechain.pdf}
		\label{fig:pair_1}
	}~~~~
	\subfloat[]{
		\includegraphics[scale=0.4,page=12]{doublechain.pdf}
		\label{fig:pair_2}
	}~~~~	
    \subfloat[]{
		\includegraphics[scale=0.4,page=13]{doublechain.pdf}
		\label{fig:pair_3}
   }
\caption{Building three new ncpws partitions when $i-1$ is connected to $n+i-1$ and both are endpoints.}
\label{fig:pair}
\end{figure}

\begin{figure}[ht!]
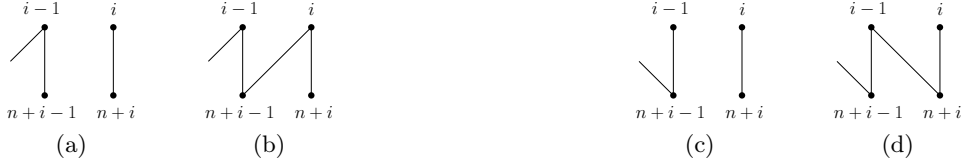

	\centering
	\subfloat[]{
		\includegraphics[scale=0.4,page=14]{doublechain.pdf}
		\label{fig:pair1_1}
	}~~~~
	\subfloat[]{
		\includegraphics[scale=0.4,page=15]{doublechain.pdf}
		\label{fig:pair1_2}
	}~~~~~~~~~~~~~~~~~~~~~~~~~~~~	
    \subfloat[]{
		\includegraphics[scale=0.4,page=16]{doublechain.pdf}
		\label{fig:pair1_3}
	}~~~~	
    \subfloat[]{
		\includegraphics[scale=0.4,page=17]{doublechain.pdf}
		\label{fig:pair1_4}
   }
\caption{Building two new ncpws partitions when $i-1$ is connected to $n+i-1$ but only one of them is an endpoint of a path.}
\label{fig:pair1}
\end{figure}

For $i=2, \ldots , n$, from a ncpws partition in $A_{i-1}$, we build three new ncp partitions as shown in Figure~\ref{fig:pair}. One of them (Figure~\ref{fig:pair_1}) belongs to $A_i$ and the other two belong to $B_i$. From a ncpws partition in $B_{i-1}$, we build two new ncpws partitions as shown in Figures~\ref{fig:pair1_1} and~\ref{fig:pair1_2} when $n+i-1$ is an endpoint, and in Figures~\ref{fig:pair1_3} and~\ref{fig:pair1_4} when $i-1$ is an endpoint.
Again, one of the two new ncpws partitions belongs to $A_i$ and the other to $B_i$.

For $i=1, \ldots , n$, we define $a_i=|A_i|$ and $b_i=|B_i|$. Note that $a_1=1$, $b_1=0$, $a_2=1$, and $b_2=2$. Thus, for $i=2, \ldots , n$, the following recurrences hold by construction

\begin{eqnarray*}
  a_i &=& a_{i-1}+b_{i-1} \\
  b_i &=& 2a_{i-1}+b_{i-1}
\end{eqnarray*}
After some easy calculations, these recurrences are equivalent to the following ones
\begin{eqnarray*}
  a_i &=& 2a_{i-1}+a_{i-2} \\
  b_i &=& 2b_{i-1}+b_{i-2}
\end{eqnarray*}
Using standard techniques to solve recurrences, we obtain the following formulas for $a_i$ and $b_i$
\begin{eqnarray*}
  a_i &=& c_1\left(1+\sqrt{2}\right)^i + c_2\left(1-\sqrt{2}\right)^i \\
  b_i &=& c_3\left(1+\sqrt{2}\right)^i + c_4\left(1-\sqrt{2}\right)^i
\end{eqnarray*}
where $c_1, c_2, c_3$ and $c_4$ are constants. When $n$ is large enough, the term $\left(1-\sqrt{2}\right)^n$ vanishes, so the behavior of $a_n$ and $b_n$ is asymptotically $O\left(\left(1+\sqrt{2}\right)^n\right)$. Hence, we can build $O\left(\left(1+\sqrt{2}\right)^{\frac{N}{2}}\right)$ ncpws partitions on $D$, using only alternating edges.

Now, we choose $\alpha n$ points of $L$ and other $\alpha n$ points of $U$, with $0\le \alpha \le 1$ to be determined. There are $\binom{n}{\alpha n}^2$ ways to choose these $2\alpha n$ points. With these $2\alpha n$ points, we build ncpws partitions using alternating edges, as previously described. With the remaining $(1-\alpha)n$ points of $L$ ($U$), we build a ncpws partition. Then, for any value of $\alpha$, the number of these ncpws partitions on $D$ is asymptotically
$$c(\alpha) = \binom{n}{\alpha n}^2 \left(4.610718614\right)^{2(1-\alpha)n}\left(1+\sqrt{2}\right)^{\frac{2\alpha n}{2}}$$
Using the fact that $\binom{n}{\alpha n} \thickapprox 2^{H(\alpha)n}$, elementary calculus shows that the maximum of $c(\alpha)$ is achieved at $\alpha = 0.25205209$, so $c(0.25205209)\approx 6.164492582^{2n}$. Therefore, $D$ admits at least $\Omega (6.164492582^N)$ ncpws partitions.

To prove that $D$ admits $\Omega (7.164102920^N)$ ncp partitions, we argue similarly. We choose $\alpha n$ points of $L$ and other $\alpha n$ points of $U$, with $0\le \alpha \le 1$. With these $2\alpha n$ points, we build ncpws partitions using alternating edges. With the remaining $(1-\alpha)n$ points of $L$ ($U$), we build a ncp partition. Then, for any value of $\alpha$, the number of these ncp partitions on $D$ is asymptotically
$$c(\alpha) = \binom{n}{\alpha n}^2 \left(5.610718614\right)^{2(1-\alpha)n}\left(1+\sqrt{2}\right)^{\frac{2\alpha n}{2}}$$
The maximum of $c(\alpha)$ is achieved at $\alpha = 0.2211799253$, so $c(0.2211799253)$ is approximately $7.164102920^{2n}$. Therefore, $D$ admits at least $\Omega (7.164102920^N)$ ncp partitions.
\end{proof}

\section{Conclusions}\label{sec:conclusion}

In this note, we have provided a proof for the result that a double chain of $n$ points admits at most $O(5.61^n)$ polygonizations, using the fact that the number of ncp partitions for $n$ points in convex position is $\mathcal{O}^*(5.610718614^{n})$. We have also proved several results concerning ncp and ncpws partitions, for points in convex position and points in a double chain.

\begin{figure}[ht!]
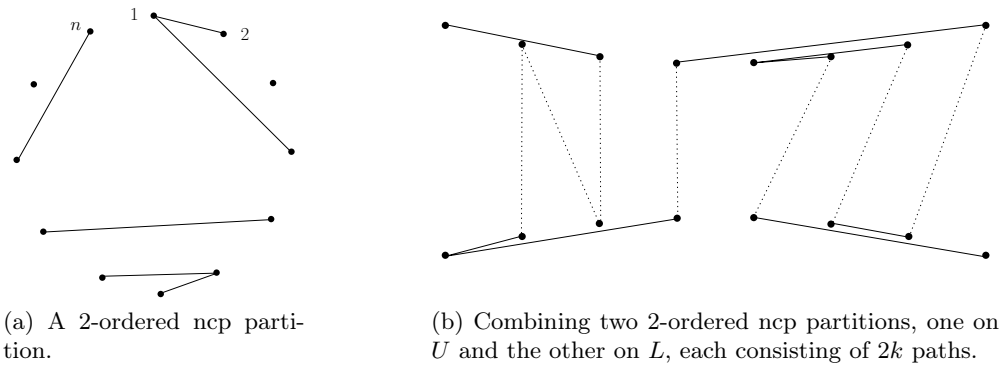

	\centering
	\subfloat[A 2-ordered ncp partition.]{
		\includegraphics[scale=0.4,page=20]{doublechain.pdf}
		\label{fig:twojumps_1}
	}~~~~~~~~~~~~
	\subfloat[Combining two 2-ordered ncp partitions, one on $U$ and the other on $L$, each consisting of $2k$ paths.]{
		\includegraphics[scale=0.45,page=21]{doublechain.pdf}
		\label{fig:twojumps_2}
   }
\caption{2-ordered ncp partitions.}
\label{fig:twojumps}
\end{figure}

Using ordered ncp partitions for points in convex position, we have shown how to build a large family of non-crossing Hamiltonian paths (cycles) on a double chain. One interesting question is whether other families of ncp partitions for points in convex position allow us to build large families of non-crossing Hamiltonian paths (cycles) on a double chain. For example, Figure~\ref{fig:twojumps_1} shows what we call a 2-ordered ncp partition. In a 2-ordered ncp partition, every path with endpoints $i<j$ satisfies the condition that it contains exactly one path between $i$ and $j$, or
is contained in another path. Choosing a 2-ordered ncp partition consisting of $2k$ paths on $L$, and another 2-ordered ncp partition consisting of $2k$ paths on $U$, it is not difficult to check that by skipping the first endpoint on $U$ and the last endpoint on $L$, a non-crossing Hamiltonian path can be obtained by adding alternating edges. Figure~\ref{fig:twojumps_2} illustrates this construction. We can prove that the number of 2-ordered ncp partitions for $n$ points in convex position is $\mathcal{O}^*(3.79^{n})$. Therefore, the number of non-crossing Hamiltonian paths on a double chain of $N$ points that are obtained using 2-ordered ncp partitions is also $\mathcal{O}^*(3.79^{N})$. Unfortunately, $3.79^{N}$ is not larger than $4.64^{N}$, the bound achieved using ordered ncp partitions for points in convex position.

Lastly, based on the results obtained for ncp and ncpws partitions, we hope that the insights provided in this note will be helpful in continuing this long line of research on counting non-crossing graphs on sets of points in the plane.

\bibliographystyle{plainurl}
\bibliography{references}

\begin{thebibliography}{10}

\bibitem{ajtai82}
Mikl{\'o} Ajtai, V{\'a}clav Chv{\'a}tal, Monroe~M. Newborn, and Endre
  Szemer{\'e}di.
\newblock Crossing-free subgraphs.
\newblock In Peter~L. Hammer, Alexander Rosa, Gert Sabidussi, and Jean Turgeon,
  editors, {\em Theory and Practice of Combinatorics}, volume~60 of {\em
  North-Holland Mathematics Studies}, pages 9--12. North-Holland, 1982.
\newblock \href {https://doi.org/https://doi.org/10.1016/S0304-0208(08)73484-4}
  {\path{doi:https://doi.org/10.1016/S0304-0208(08)73484-4}}.

\bibitem{asi18}
Andrei Asinowski and G{\"u}nter Rote.
\newblock Point sets with many non-crossing perfect matchings.
\newblock {\em Computational Geometry}, 68:7--33, 2018.
\newblock Special Issue in Memory of Ferran Hurtado.
\newblock \href {https://doi.org/https://doi.org/10.1016/j.comgeo.2017.05.006}
  {\path{doi:https://doi.org/10.1016/j.comgeo.2017.05.006}}.

\bibitem{bender74}
Edward~A. Bender.
\newblock Asymptotic methods in enumeration.
\newblock {\em SIAM Review}, 16(4):485--515, 1974.
\newblock \href {https://doi.org/10.1137/1016082} {\path{doi:10.1137/1016082}}.

\bibitem{cruces25}
Bel{\'e}n Cruces, Clemens Huemer, and Dolores Lara.
\newblock On the number of drawings of a combinatorial triangulation, 2025.
\newblock \href {http://arxiv.org/abs/2504.17088} {\path{arXiv:2504.17088}}.

\bibitem{dejager21}
Isaac DeJager, Paul Drube, MMadeleine Naquin, and Frank Seidl.
\newblock Colored {Motzkin} paths of higher order.
\newblock {\em Journal of Integer Sequences}, 24(4):485--515, 2021.

\bibitem{dumitrescu13}
Adrian Dumitrescu, Andr\'{e} Schulz, Adam Sheffer, and Csaba~D. T\'{o}th.
\newblock Bounds on the maximum multiplicity of some common geometric graphs.
\newblock {\em SIAM Journal on Discrete Mathematics}, 27(2):802--826, 2013.
\newblock \href {https://doi.org/10.1137/110849407}
  {\path{doi:10.1137/110849407}}.

\bibitem{esteban22}
Guillermo Esteban, Clemens Huemer, and Rodrigo~I. Silveira.
\newblock New production matrices for geometric graphs.
\newblock {\em Linear Algebra and its Applications}, 633:244--280, 2022.
\newblock \href {https://doi.org/https://doi.org/10.1016/j.laa.2021.10.013}
  {\path{doi:https://doi.org/10.1016/j.laa.2021.10.013}}.

\bibitem{flajolet99}
Philippe Flajolet and Marc Noy.
\newblock Analytic combinatorics of non-crossing configurations.
\newblock {\em Discrete Mathematics}, 204(1):203--229, 1999.
\newblock \href {https://doi.org/https://doi.org/10.1016/S0012-365X(98)00372-0}
  {\path{doi:https://doi.org/10.1016/S0012-365X(98)00372-0}}.

\bibitem{garcia00}
Alfredo Garc{\'\i}a, Marc Noy, and Javier Tejel.
\newblock Lower bounds on the number of crossing-free subgraphs of {$K_N$}.
\newblock {\em Computational Geometry}, 16(4):211--221, 2000.
\newblock \href {https://doi.org/https://doi.org/10.1016/S0925-7721(00)00010-9}
  {\path{doi:https://doi.org/10.1016/S0925-7721(00)00010-9}}.

\bibitem{huemer19}
Clemens Huemer and Anna {de Mier}.
\newblock Lower bounds on the maximum number of non-crossing acyclic graphs.
\newblock {\em European Journal of Combinatorics}, 48:48--62, 2015.
\newblock \href {https://doi.org/https://doi.org/10.1016/j.ejc.2015.02.008}
  {\path{doi:https://doi.org/10.1016/j.ejc.2015.02.008}}.

\bibitem{huemer15}
Clemens Huemer, Alexander Pilz, and Rodrigo~I. Silveira.
\newblock A new lower bound on the maximum number of plane graphs using
  production matrices.
\newblock {\em Computational Geometry}, 84:36--49, 2019.
\newblock \href {https://doi.org/https://doi.org/10.1016/j.comgeo.2019.07.005}
  {\path{doi:https://doi.org/10.1016/j.comgeo.2019.07.005}}.

\bibitem{noy98}
Marc Noy.
\newblock Enumeration of noncrossing trees on a circle.
\newblock {\em Discrete Mathematics}, 180(1):301--313, 1998.
\newblock \href {https://doi.org/https://doi.org/10.1016/S0012-365X(97)00121-0}
  {\path{doi:https://doi.org/10.1016/S0012-365X(97)00121-0}}.

\bibitem{rw23}
Daniel Rutschmann and Manuel Wettstein.
\newblock Chains, {Koch} chains, and point sets with many triangulations.
\newblock {\em J. ACM}, 70(3), 2023.
\newblock \href {https://doi.org/10.1145/3585535} {\path{doi:10.1145/3585535}}.

\bibitem{sharir11}
M.~Sharir and A.~Sheffer.
\newblock Counting triangulations of planar point sets.
\newblock {\em Electronic Journal of Combinatorics}, 18:301--313, 2011.

\bibitem{sharir12}
Micha Sharir and Adam Sheffer.
\newblock Counting plane graphs: Cross-graph charging schemes.
\newblock {\em Combinatorics, Probability and Computing}, 22(6):935--954, 2013.
\newblock \href {https://doi.org/10.1017/S096354831300031X}
  {\path{doi:10.1017/S096354831300031X}}.

\bibitem{sharir13}
Micha Sharir, Adam Sheffer, and Emo Welzl.
\newblock Counting plane graphs: Perfect matchings, spanning cycles, and
  kasteleyn's technique.
\newblock {\em Journal of Combinatorial Theory, Series A}, 120(4):777--794,
  2013.
\newblock \href {https://doi.org/https://doi.org/10.1016/j.jcta.2013.01.002}
  {\path{doi:https://doi.org/10.1016/j.jcta.2013.01.002}}.

\bibitem{sheffer}
{Some Plane Truths}.
\newblock Numbers of plane graphs.
\newblock \url{https://adamsheffer.wordpress.com/numbers-of-plane-graphs}.

\bibitem{A303730}
{The On{-}line Encyclopedia of Integer Sequences}.
\newblock Sequence {A}303730.
\newblock \url{https://oeis.org/A303730}.

\end{thebibliography}

\end{document}